\newtheorem{assumption}{Assumption}
\newtheorem{remark}{Remark}
\newtheorem{theorem}{Theorem}
\newtheorem{proposition}{Proposition}
\newtheorem{problem}{Problem}
\begin{document}
	
\title{Safety-based Speed Control of a Wheelchair using Robust Adaptive Model Predictive Control} 
	
\author{Meng Yuan, Ye Wang, Lei Li, Tianyou Chai, Wei Tech Ang}

\maketitle

\begin{abstract}
    Electric-powered wheelchair plays an important role in providing accessibility for people with mobility impairment. Ensuring the safety of wheelchair operation in different application scenarios and for diverse users is crucial when the designing controller for tracking tasks. In this work, we propose a safety-based speed tracking control algorithm for wheelchair systems with external disturbances and uncertain parameters at the dynamic level. The set-membership approach is applied to estimate the sets of uncertain parameters online and a designed model predictive control scheme with online model and control parameter adaptation is presented to guarantee safety-related constraints during the tracking process. The proposed controller can drive the wheelchair speed to a desired reference within safety constraints. For the inadmissible reference that violates the constraints, the proposed controller can steer the system to the neighbourhood of the closest admissible reference. The effectiveness of the proposed control scheme is validated based on the high-fidelity speed tracking results of two tasks that involve feasible and infeasible references. 
\end{abstract}

\begin{IEEEkeywords}
	Model predictive control, speed tracking, robotic wheelchair, safety constraints.
\end{IEEEkeywords}

\section{Introduction}

\IEEEPARstart{W}{heelchairs} are essential devices in providing mobility for elderly and physically impaired people including patients with a spinal cord injury and stroke patients \cite{Haisma2006,Jang2016,Candiotti2019}. Among different types of wheelchairs, the electric-powered wheelchair has seen increasing popularity due to its convenience compared to manual wheelchairs. According to a survey study in \cite{Fehr2000}, around $80\%$ of electric wheelchair users rely on the joystick to maneuver the wheelchair. However, the unmodified reference generated by the joystick may result in unexpected high speed and acceleration with safety issues, and advanced speed control is generally required.

As a system with nonholonomic constraints and a clear model of the mechatronics system, the tracking control of wheelchairs has been widely studied by many researchers \cite{lian2015near,li2017adaptive,li2021trajectory}. Depending on the type of model used and the level of control implemented, the speed control of wheelchairs can be divided into two categories. The first category is related to the kinematic model of the system where the motion of interest fails to satisfy Brockett’s necessary conditions \cite{Zhang2014}. Given the desired Cartesian position and orientation of the wheelchair, the control objective is to design the linear and angular velocities such that the wheelchair tends to the given position with the required orientation. In \cite{Zhang2014}, a finite-time tracking of the wheelchair based on cascaded control architecture and sliding model control was presented. In \cite{DeLaCruz2011}, an adaptive tracking controller was designed for a wheelchair with the input-to-state stability guarantee.

Although some existing algorithms, e.g. the timed elastic band-based method, can minimize the execution time of trajectory while considering the kinodynamic constraints, the tracking control at the kinematic level is under the assumption of perfect velocity tracking at the dynamics level, which may not be realistic in practical applications \cite{Fu2013}. The ignored disturbances and system uncertainties can cause severe constraint violation even if the velocity and acceleration tolerances are considered at the kinematic level when conducting both planning and control.

The second class of tracking control for nonholonomic systems involves the dynamics of actuators. The objective of this type of control is to design the current or voltage at the dynamics level to ensure the convergence of the system state to desired Cartesian position or wheel velocities. In \cite{Fu2013}, the trajectory tracking of nonholonomic system is achieved by force control using cascaded and back-stepping techniques. Later, the authors extend their controller for motion tracking of a mobile robot at the voltage level with a simplification based on a linear relation between velocity and voltage \cite{Fu2020}. Based on Lyapunov and back-stepping methods, \cite{Do2004} proposed an adaptive control law for stabilizing and tracking of nonholonomic robots with unknown system parameters. 

However, none of these methods discussed above guarantee the safety-related constraints of wheelchairs in the presence of disturbances. This is an important and common problem that can be found in scenarios when the wheelchair is required to operate within some given speed and acceleration constraints while the daily-life tasks introduce external disturbances such as driving the wheelchair on inclined ramps. Moreover, the advent of rentable or sharing wheelchairs at big-city hospitals makes it desirable to design a controller that can ensure safety-related constraints at the
dynamics level while considering the variation of system parameters due to the change of users \cite{Zhecheng}.

Model predictive control (MPC), as an optimal control strategy capable of explicitly handling the operation constraints, has been widely used in applications with strict demands on state and input constraints \cite{Mayne2000,Mayne2014}. To ensure robust constraint satisfaction for systems with parameter variations, in \cite{Fleming2015}, a robust tube MPC is designed for a linear parameter varying (LPV) system and the online computation load is reduced by constructing a terminal set involving the norm bounds of tube parameters. In \cite{Zhang2020}, a recursive least square-based adaptive MPC is designed for constrained systems with unknown model parameters. The tube-based adaptive MPC provides less conservative performance compared with the robust tube MPC. For systems with both parameter variations and external disturbances, robust adaptive MPC with the set-membership approach is adopted to provide robust constraint satisfaction with online parameter update \cite{Tanaskovic2019b,Lorenzen2019,Lu2021,Bujarbaruah2022}. However, most of the existing works of robust adaptive MPC were designed for the regulation problem and few works can be found for the tracking problem of systems with constraints in the presence of external disturbances and uncertain parameters. 

Inspired by the timely needs of operating wheelchairs safely in different application scenarios and for diverse users, the main contribution of this work is to present a safety-based constrained tracking control algorithm for wheelchair systems with external disturbances and uncertain parameters at the dynamics level. Under the assumption of unknown-but-bounded disturbances, the set-membership approach is adopted to provide an updated set of uncertain parameters which is used in the subsequent MPC design. For any infeasible reference that violates the safety constraints, an optimization-based method is utilized to compute the closest admissible reference for tracking. The state and input constraints are robustly satisfied in the proposed MPC framework with reference-dependent and tube-based constraints in vertex representation of states. The recursive feasibility and input-to-state stability of the wheelchair system with the proposed MPC controller are guaranteed. The effectiveness of the proposed safety-based control algorithm is validated by two speed tracking tasks on the high-fidelity model of a practical wheelchair.

The remainder of this paper is organized as follows: In Section~\ref{sec:sys_prob}, the dynamics of the wheelchair system at actuator level is described and the system in the LPV form with unknown parameters is discussed. The proposed robust adaptive tracking controller with details in parameter estimation, feasible reference generation, robust constraint satisfaction and tracking MPC formulation are presented in Section \ref{sec:controller}. The tracking results of the proposed safety-based control algorithm are demonstrated in Section~\ref{sec:results}. Section~\ref{sec:conclusion} concludes this work. 

\emph{Notations}: When defining the variable, we follow the rule that capitalized letters are for matrices and small letters are for vectors or scalars. $\mathbb{R}$ and $\mathbb{Z}$ are the sets of real and integer numbers. Given two integers $a$, $b\in \mathbb{Z}$, $\mathbb{Z}_{a+} \triangleq \{ i\in \mathbb{Z} \,|\, i \geq a\}$ and $\mathbb{Z}_{[a,b]} \triangleq \{i \in \mathbb{Z} \,|\, a\leq i \leq b\}$. The $i$-th row of matrix $X$ and the $i$-th element of vector $x$ are represented by $X^{[i]}$ and $x^{[i]}$, respectively. The $i$-th vertex of $x \in \mathcal{X}$ is denoted by $x^{(i)}$. A non-negative matrix is denoted by $X \geq 0$. The positive definite and semi-definite matrices are represented as $X \succ 0 $ and $X\succeq 0 $, respectively. For a matrix $X \in \mathbb{R}^{n \times n}$, the smallest eigenvalue is denoted by $\underline{\lambda}(X)$. The identity matrix of dimension $n$ is denoted by $I_{n}$ and an $m$-dimension vector with all elements as 1 is denoted by $\mathbf{1}_{m}$. The $m \times n$ matrix with all elements as zero is denoted by $\mathbf{0}_{m,n}$. A diagonal matrix with main diagonal elements $a_1,\ldots,a_n$ is denoted by $\mathrm{diag}(a_1, \ldots, a_n)$. The following sets are defined: $ \mathbb{S}^{n} = \{ X \in \mathbb{R}^{n \times n}: X = X^\top \} $, $ \mathbb{S}^{n}_{\succ 0} = \{ X \in \mathbb{S}^{n}: X \succ 0 \} $ and $ \mathbb{S}^{n}_{\succeq 0} = \{ X \in \mathbb{S}^{n}: X \succeq 0 \}$. A convex polyhedral set of $x$ is defined as $ \mathcal{P}_{x}(F_{x},b_{x})= \{ x \,| \, F_{x}x \leq b_{x}\} $. With $P \in \mathbb{S}^{n}_{\succ 0}$, an ellipsoidal set of $x$ is defined as $\mathcal{E}(P,1) = \{ x \,|\, x^{\top} P x \leq 1 \}$. For a vector $x\in \mathbb{R}^{n}$ with a matrix $Q$, $\| x \|$ denotes the 2-norm and $\| x \|_{Q}$ stands for $\sqrt{x^{\top}Qx}$. The vector $x_{i|k}$ represents the predicted value of $x$ at a sampling time instant $k+i$ based on measurement at $k$. The vector $x(k)$ stands for the measured value of $x$ at a sampling time instant $k$. A continuous function $f: [0,a) \rightarrow [0,\infty)$ belongs to class $\mathcal{K}_{\infty}$ if $a=\infty$ and $f(r) \rightarrow \infty$ as $r\rightarrow \infty$. 

\section{System description and problem formulation}\label{sec:sys_prob}
	
\subsection{Wheelchair Dynamics}	

With implicit variable change for rotary to translational movement, the electrical and mechanical subsystems of electric powered wheelchair are given as follows \cite{Do2004}:
\begin{subequations}
    \begin{align}
       & L\dot{i}_c+ Ri_{c} +K_{e}v = u, \label{eq:ele_wheel}\\
       & M\dot{v} + Dv + w_{f} = K_{t}i_c, \label{eq:mecha_wheel}
    \end{align}
\end{subequations}
where $v = [v_{1},v_{2}]^{\top}$ is the vector of the linear velocities, $u = [u_{1},u_{2}]^{\top}$ is the vector of the motor voltages, $i_c = [i_{c_1}, i_{c_2}]^{\top}$ is the vector of the currents and $w_{f} = [w_{1}, w_{2}]^{\top}$ is the vector of the lumped disturbance torques on the right and left wheels, respectively. $M$ and $D$ are the equivalent mass and damping coefficient matrices, which can be expressed as

\[
M=\left[\begin{array}{cc}
	m_{11} & m_{12}\\
	m_{21} & m_{22}
\end{array}\right], \, D= \mathrm{diag}(d_{1},d_{2}),
\]
where $m_{11}$, $m_{12}$, $m_{21}$ and $m_{22}$ are scalars. The non-zero matrix $M$ couples the dynamics of left and right wheels, and $d_{1}$ and $d_{2}$ are the damping coefficients for the right and left wheels, respectively. Furthermore, $L = \mathrm{diag}(l_1, l_2)$ and $R = \mathrm{diag}(r_1,r_2)$ are the inductance and resistance matrices of system, respectively. $K_{e}= \mathrm{diag}(k_{e_1}, k_{e_2})$ and $K_t = \mathrm{diag}(k_{t_1}, k_{t_2})$ are the back electromotive force constant and torque constant matrices, respectively.

For system without current feedback $i_c$, the dynamics in \eqref{eq:ele_wheel} and \eqref{eq:mecha_wheel} can be reformulated as
\begin{equation}
	\label{eq:dyn_sys}
	\ddot{v} + (M^{-1}D+L^{-1}R)\dot{v} + \Gamma v = M^{-1} L^{-1} K_{t}u + w,
\end{equation}
where $w = -M^{-1}\dot{w}_{f} - M^{-1}L^{-1}Rw_{f}$ and $\Gamma = M^{-1}L^{-1}RD+M^{-1}L^{-1}K_{t}K_{e}$ are the lumped terms.
	
Let $x \triangleq [v^{\top},\dot{v}^{\top}]^{\top}$ be the state vector. By using the Euler forward approximation, the discrete-time state-space model with a sampling time interval $T_{s}$ can be formulated as

\begin{align}
	\label{eq:sys_discrete}
	x(k+1) & =A x(k)+B u(k)+E w(k) \nonumber\\
	& =\left[\begin{array}{cc}
		I_{2} & T_{s}I_{2}\\
		-T_{s}\Gamma & I_{2}-T_{s}(M^{-1}D+L^{-1}R)
	\end{array}\right]x(k) \nonumber\\
	& \;\; +\left[\begin{array}{c}
		\mathbf{0}_{2,2}\\
		T_{s}M^{-1}K_{t}L^{-1}
	\end{array}\right]u(k)+\left[\begin{array}{c}
		\mathbf{0}_{2,2}\\
		T_{s} I_{2}
	\end{array}\right] w(k).
\end{align}

All the states $x(k)$ and inputs $u(k)$ are required to satisfy the following constraints to guarantee the safety of the wheelchair:
	\begin{equation}
	\label{eq:const_state_input}
		Gx(k) + Hu(k) \leq b,
	\end{equation}
with given matrices $G \in \mathbb{R}^{n_{c}\times 4}$, $H \in \mathbb{R}^{n_{c}\times 2}$ and $b \in \mathbb{R}^{n_{c}}$.

\begin{remark}
    The safety constraints in \eqref{eq:const_state_input} are defined in a general form, where $b$ is not necessarily a vector with all elements of 1. This form offers more degrees of freedom when configuring state and input sets, and indicates that the coordinate origin may not be inside both state and input constraint sets. 
\end{remark}

\begin{assumption}\label{assumption:bounded disturbances}
    For the wheelchair dynamics \eqref{eq:sys_discrete}, the disturbance vector $w(k)$ is unknown but can be bounded by a convex polyhedral set as
    \begin{equation}
    	w(k) \in \mathcal{P}_w (\bar{F}_{w},\bar{b}_{w}), \forall k \in \mathbb{Z}_{0+},
    \end{equation}
    with $\bar{F}_{w} \in \mathbb{R}^{n_{w}\times 2}$ and $\bar{b}_{w} \in \mathbb{R}^{n_{w}}$.
    
\end{assumption}

\subsection{Wheelchair LPV Model}

Due to the fact that the wheelchair would be used by diverse users, there exist time-varying uncertain parameters in the wheelchair dynamics \eqref{eq:sys_discrete}. Therefore, the model \eqref{eq:sys_discrete} is reformulated in an LPV form as follows:
\begin{equation}
\label{eq:sys_dis_vary}
    x(k+1) = A(\theta) x(k)+B(\theta) u(k)+E w(k),
\end{equation}
where $A(\theta)$ and $B(\theta)$ are parameter-varying affine matrices with unknown parameter $\theta \in \mathbb{R}^{q}$ as
\begin{equation}
	\label{eq:sys_para_variation}
	A(\theta) = A_{0} + \sum_{j=1}^{q} A_{j} \theta^{[j]}, \; B(\theta) = B_{0} + \sum_{j=1}^{q} B_{j} \theta^{[j]},
\end{equation}
where $q$ is the number of unknown parameters and the matrices $A_{j}$, $B_{j}$, $j\in \mathbb{Z}_{[1,q]}$ are known.

\begin{assumption}
    For the wheelchair application, the parameter $\theta$ may involve the varying of mass and damping coefficient due to the change of operators. The true value of uncertain parameter $\theta^{*}$ is unknown, piecewise constant but can be bounded by a polyhedral set $\theta^* \in \mathcal{P}_{\theta}(\bar{F}_{\theta},\bar{b}_{\theta})$, where $\bar{F}_{\theta}\in \mathbb{R}^{n_{\theta} \times q}$ and $\bar{b}_{\theta} \in \mathbb{R}^{n_{\theta}}$.
\end{assumption}

With given piecewise constant speed references $v_{d}$ representing the desired speeds of right and left wheels, this work is dedicated to solve the following problem:
\begin{problem}
	For the nonholomonic wheelchair in the LPV form of \eqref{eq:sys_dis_vary} with \eqref{eq:sys_para_variation}, design a tracking MPC law $u = \kappa(x,v_{d})$ to steer the wheelchair speed to the desired piecewise constant reference $v_{d}$ while the state and input constraints \eqref{eq:const_state_input} are always satisfied in the presence of parameter uncertainties and external disturbances.
\end{problem}

\section{Controller architecture}\label{sec:controller}

In this section, we propose a robust adaptive MPC for achieving the safety-based speed tracking control of the wheelchair with unknown parameters and external disturbances. The set-membership approach is used to provide an updated polytopic set of unknown parameters and the estimated value is updated based on the Euclidean projection. An admissible state and input are introduced when infeasible reference is given in speed tracking control. It is followed by the construction of polytopic tubes for ensuring the state and input constraints. Finally, an MPC tracking control scheme is presented with terminal cost and terminal constraint. The closed-loop properties of the proposed control algorithm are provided.  

\subsection{Set-membership Parameter Estimation}

To use the wheelchair LPV model in the MPC design, an estimation method for unknown uncertain parameters is required. As suggested by \cite{Lorenzen2019,Chisci1998}, the set-membership parameter estimation can provide a quantification of associate uncertainties and contribute to the MPC design, which is adopted in this work. 

For notation simplicity, we denote the uncertain regressor vector $\Phi \in \mathbb{R}^{4 \times q}$ and the certain regressor vector $\phi \in \mathbb{R}^{4\times 1}$ as
\begin{subequations}
    \begin{align}
        \Phi(x,u) &= \left[ A_{1} x + B_{1} u, \cdots, A_{q}x+B_{q}u \right],\\
        \phi(x,u) &= A_{0} x + B_{0} u.
    \end{align}
\end{subequations}

Let $\bar{F}_{\theta} \in \mathbb{R}^{n_{\theta} \times q}$ be a pre-defined time-invariant matrix, the estimated parameter at time instant $k$, i.e., $\hat{\theta}(k)$, is bounded in a polytopic set:
\begin{equation}\label{eq:theta est set}
    \hat{\theta}(k) \in \Theta(k) = \mathcal{P}_{\theta}(\bar{F}_{\theta},b_{\theta}(k)).
\end{equation}

Then, the uncertain parameter bounding set can be iteratively computed by 
\begin{equation}\label{eq:theta set update}
    \Theta(k) = \Theta(k-1) \cap \Delta_\Theta (k), \, k \in \mathbb{Z}_{1+},
\end{equation}
where $ \Delta_\Theta(k) =  \{\theta \,|\, x(k)-A(\theta)x(k-1)-{B(\theta) u(k-1)} \in \mathcal{P}(\bar{F}_{w},\bar{b}_{w})\} = \{\theta \,|\, -\bar{F}_{w}\Phi(k-1)\theta \leq \bar{b}_{w}+\bar{F}_{w}\left( \phi(k-1)-x(k) \right)\}$ and $\Theta(0) = \mathcal{P}_{\theta}(\bar{F}_{\theta}, \bar{b}_{\theta})$.

Since the constant matrix $\bar{F}_{\theta}$ defines fixed directions of half spaces, the update of $\Theta(k)$ is equivalent to updating $b_{\theta}(k)$, which can be obtained by solving the following optimization problem \cite[Proof of Propoisition 3.31]{Blanchini2015}:

\begin{subequations}
    \begin{align}
        b_{\theta}^{[l]}(k) & =\underset{\varLambda^{[l]}}{\min} \,\varLambda^{[l]}\left[\begin{array}{c}
        b_{\theta}(k-1) \\
        \bar{b}_{w}+\bar{F}_{w}\left(\phi(k-1)-x(k)\right)
        \end{array}\right],\\
        \text{subject to} & \;\varLambda^{[l]}\left[\begin{array}{c}
        \bar{F}_{\theta},\\
        -\bar{F}_{w}\Phi(k-1)
        \end{array}\right]=\bar{F}_{\theta}^{[l]}, \\
         & \; \varLambda^{[l]}\geq 0,
\end{align}
\end{subequations}
for $l \in \mathbb{Z}_{[1,n_{\theta}]}$ , where $\varLambda \in \mathbb{R}^{n_{\varLambda}\times n_{\varLambda}}$, $n_{\varLambda}=n_{\theta}+n_{w}$. At time instant $k \in \mathbb{Z}_{1+}$, the estimated value of uncertain parameters is computed by a projection of $\hat{\theta}(k-1)$ to the set $\Theta(k)$:
\begin{equation}
\label{eq:proj_theta}
\hat{\theta}(k) = \underset{\theta\in \Theta(k)}{\arg\min} \, \| \theta - \hat{\theta}(k-1) \|_{2} ,
\end{equation}
with a given initial $\hat{\theta}(0)$.

\begin{remark}
    Other parameter estimation methods such as Kalman filters or least mean squares filters can be used to generate an intermediate value of uncertain parameter before projecting $\hat{\theta}(k-1)$ to $\Theta(k)$ \cite{Lorenzen2019}. Here, the direct projection \eqref{eq:proj_theta} is essential in the closed-loop property analysis of the proposed control algorithm.
\end{remark}
	
\subsection{Controller Design}

\subsubsection{Generation of Admissible Reference}

Under certain circumstances, such as operating the wheelchair manually with joystick input, the given reference $v_{d}$ may not be admissible since the state and input constraints for safety purpose cannot be satisfied. The proposed controller is required to steer the wheelchair to the closest admissible reference characterized by introducing a pair of steady state and input $(x_{s},u_{s})$ as:
\begin{subequations}\label{prob:xs_us}
    \begin{align}
        (x_{s}(k),u_{s}(k)) & = \underset{x_{s},u_{s}}{\arg\min}\left\Vert v_{d}(k)-y_{s}\right\Vert _{Q_{s}}^{2},\\
        \text{subject to} & \;\; x_{s}=A(\hat{\theta}(k))x_{s}+B(\hat{\theta}(k))u_{s},\\
        & \;\; y_{s} = C_{s} x_{s}, \\
        & \;\; Gx_{s}+Hu_{s}\leq b,
\end{align}
\end{subequations}
where $C_{s} = [I_{2},\mathbf{0}_{2,2}]$, and the weighting matrix $Q_{s}$ can be chosen as $I_{2}$ to equally penalise the speed difference.

\subsubsection{Synthesis of Local and Terminal Control Gains}

To achieve speed tracking using MPC with $N$-step prediction horizon, an error state $e_{i|k} = x_{i|k} - x_{s}(k)$ and virtual control input $\bar{u}_{i|k} = u_{i|k} - u_{s}(k)$ are introduced, and the error coordinate follows: 
\begin{equation}
    e_{i+1|k} = A(\hat{\theta}(k))e_{i|k} + B(\hat{\theta}(k)) \bar{u}_{i|k}, \, i \in \mathbb{Z}_{[0,N]}.
\end{equation}

The control sequence at time instant $k$ is designed below to deal with the open-loop and closed-loop mismatch caused by uncertain parameters and external disturbances:
\begin{align}\label{eq:control sequence}
    \bar{u}_{i|k} =  \begin{cases}
        K e_{i|k} + \mu_{i|k}, & i \in \mathbb{Z}_{[0,N-1]},\\
        K_f e_{i|k}, & i = N,
    \end{cases}
\end{align}
where $\mu_{i|k}$, $i \in \mathbb{Z}_{[0,N-1]}$ are control parameters in the MPC; $K \in \mathbb{R}^{2 \times 4}$ and $K_f \in \mathbb{R}^{2 \times 4}$ are local and terminal control gains, respectively. The following assumptions are useful for finding the control gains $K$ and $K_{f}$.

\begin{assumption}(Local Control Gain $K$)\label{assumption:K}
For matrix $\bar{F}_{e} \in \mathbb{R}^{n_{e}\times 4}$ and all $e$ that satisfies $\bar{F}_{e}e\leq \mathbf{1}_{n_{c}}$, there exists a local control gain $K$ such that $\bar{F}_{e}(A(\theta)+B(\theta)K)e\leq \varepsilon \mathbf{1}_{n_{c}}$ with a $\varepsilon \in [0,1)$ for all $\theta \in \mathcal{P}_{\theta}(\bar{F}_{\theta},\bar{b}_{\theta})$.
\end{assumption}

\begin{assumption} (Terminal Control Gain $K_{f}$)\label{assumption:Kf}
    For all $\theta \in \mathcal{P}_{\theta}(\bar{F}_{\theta}, \bar{b}_{\theta})$, given matrices $Q \in \mathbb{S}^{4}_{\succeq 0}$ and $R \in \mathbb{S}^{2}_{\succ 0}$, there exists a positive definite matrix $P \in \mathbb{S}^{4}_{\succ 0}$ and a terminal state feedback gain $K_f$ such that $P - A_c(\theta) ^{\top} P A_c(\theta) - Q - K_f^\top R K_f \succeq 0$ and $ e \in \mathcal{E}(A_c(\theta)^{\top} P A_c(\theta),\varepsilon_f) $ with a $\varepsilon_f \in [0,1)$ for all $ e \in \mathcal{E}(P,1)$, where $A_c(\theta) = A(\theta) + B(\theta)K_{f} $.
\end{assumption}

\begin{remark}
   
    The matrix $\bar{F}_{e}$ can be chosen such that the polytopic set $\mathcal{P}_{e}(\bar{F}_{e},\mathbf{1}_{n_{c}})$ approximating a robust control invariant set of state formed by $\mathcal{E}(\tilde{P},1) = \{ e \,|\, e^{\top} \tilde{P} e \leq 1 \}$, where $\tilde{P}$ can be chosen by satisfying the condition $\tilde{P} - (A(\theta) + B(\theta) K)^{\top}\tilde{P}(A(\theta) + B(\theta)K) \succ 0$. Then, the value of $K$ can be computed by minimizing $\varepsilon$ while subjecting to set inclusion with $\mathcal{P}_{e}(\bar{F}_{e},\mathbf{1}_{n_{e}}) \subseteq \mathcal{P}_{e}(\bar{F}_{e}(A(\theta)+B(\theta)K),\varepsilon \mathbf{1}_{n_{e}})$ \cite{Blanchini2015}.
\end{remark}

\begin{remark}\label{remark:Kf=K}
    The terminal control gain $K_f$ can be chosen by satisfying the condition described in Assumption \ref{assumption:Kf}. Alternatively, since the local control gain $K$ is chosen from a contractive polyhedral set, there may also exist a contractive ellipsoidal set as discussed in Assumption \ref{assumption:Kf} with this gain. So the terminal control gain $K_f$ can be chosen to be the same as $K$.
\end{remark}

\subsubsection{Tube-based Constraint Satisfaction}

To ensure the robust constraints satisfaction based on the predicted error state and input, we start by constructing a sequence of time-varying sets along the MPC prediction horizon for the tracking error state:
\begin{align}\label{eq:polytopic_tube}
    \bar{F}_e e_{i|k} \leq \alpha_{i|k}, \; i \in \mathbb{Z}_{[0,N]},
\end{align}
where $\alpha_{i|k} \in \mathbb{R}^{n_{e}\times 1}$ are another decision variables in the MPC. 

With the given $\bar{F}_{e}$, the set of error state in \eqref{eq:polytopic_tube} can be represented using the vertices of each set as:
\begin{align}\label{eq:eAndS}
    e_{i|k}^{(j)} = S_j \alpha_{i|k}, \; j \in \mathbb{Z}_{[1,p]},
\end{align}
where $p$ is the total number of vertices of the tracking error set. For each vertex with index $j$, the following equality holds
\begin{equation}
\label{eq:active_tube}
    \bar{F}_{e}^{[l]}e_{i|k}^{(j)} = \alpha_{i|k}^{[l]}, j\in \mathbb{Z}_{[1,p]}, l \in \mathcal{R}_{j},
\end{equation}
where $\mathcal{R}_{j}\in \mathbb{R}^{4} \subset \mathbb{Z}_{[1,n_c]}$ is the index set for the vertex with index $j$ that describes the active rows of inequality in \eqref{eq:polytopic_tube}. 

From \eqref{eq:eAndS} and \eqref{eq:active_tube}, we can see that the $S_j$ is independent to $\alpha_{i|k}$ and can be computed based on:
\begin{equation}
    \bar{F}_{e}^{[l]}S_{j} = I_{n_c}^{[l]}, j \in \mathbb{Z}_{[1,p]}, l \in \mathcal{R}_{j}.
\end{equation}

For all $e$ satisfying \eqref{eq:polytopic_tube}, $\theta\in \Theta(k)$ and $w\in \mathcal{P}_{w}(\bar{F}_{w},\bar{b}_{w})$, at the next prediction time step $i+1$, the following condition should also hold, for $i\in \mathbb{Z}_{[0,N-1]}$,
\begin{equation}
\label{eq:tube_i_1}
    \bar{F}_e \left( A(\hat{\theta}(k))e_{i|k} + B(\hat{\theta}(k)) \bar{u}_{i|k} \right) + \bar{w} \leq \alpha_{i+1|k},
\end{equation}
where the vector $\bar{w}$ has its elements $w^{[l]} = \underset{w\in \mathcal{P}_{w}}{\max}(\bar{F}_{e}w)^{[l]}$ for $l\in \mathbb{Z}_{[1,n_{w}]}$. 

\begin{proposition}[Polytopic Set Inclusion \cite{Blanchini2015}]\label{pro:set inclusion}
    For two polytopic set $\mathcal{P}_1 (F_1,b_1)$ and $\mathcal{P}_2 (F_2,b_2) $, the inclusion $\mathcal{P}_1 (F_1,b_1) \subseteq \mathcal{P}_2 (F_2,b_2)$ holds if and only if there exists a non-negative matrix $\Omega \geq 0$ such that
    \begin{subequations}
        \begin{align}
            & \Omega F_1  = F_2,\\
            & \Omega b_1 \leq b_2.
        \end{align}
    \end{subequations}
\end{proposition}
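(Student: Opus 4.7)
The plan is to prove the two directions of the biconditional separately. The sufficiency direction is a straightforward chain of inequalities exploiting that multiplication by a non-negative matrix preserves componentwise inequalities, while the necessity direction requires invoking linear-programming duality (equivalently, the affine form of Farkas' lemma) row by row on $F_2$.

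For the sufficient direction, I would take any $x \in \mathcal{P}_1(F_1,b_1)$, so that $F_1 x \leq b_1$. Since $\Omega \geq 0$, left-multiplying preserves the inequality componentwise, giving $\Omega F_1 x \leq \Omega b_1$. Substituting the hypotheses $\Omega F_1 = F_2$ and $\Omega b_1 \leq b_2$ yields $F_2 x \leq b_2$, so $x \in \mathcal{P}_2(F_2,b_2)$. This half is essentially a one-line computation.

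For the necessary direction, assume $\mathcal{P}_1(F_1,b_1) \subseteq \mathcal{P}_2(F_2,b_2)$. I would construct $\Omega$ one row at a time. Fix an index $l \in \mathbb{Z}_{[1,n_2]}$ (where $n_2$ is the number of rows of $F_2$) and consider the linear program $\max_x F_2^{[l]} x$ subject to $F_1 x \leq b_1$. By the inclusion hypothesis, any feasible $x$ also satisfies $F_2^{[l]} x \leq b_2^{[l]}$, so the primal optimum is bounded above by $b_2^{[l]}$. Applying LP duality, the dual problem $\min_\omega \omega b_1$ subject to $\omega F_1 = F_2^{[l]}$ and $\omega \geq 0$ attains the same optimal value and is feasible, producing a non-negative row vector $\omega^{(l)}$ with $\omega^{(l)} F_1 = F_2^{[l]}$ and $\omega^{(l)} b_1 \leq b_2^{[l]}$. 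Stacking the $\omega^{(l)}$ as the rows of $\Omega$ yields $\Omega \geq 0$, $\Omega F_1 = F_2$, and $\Omega b_1 \leq b_2$.

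The main obstacle is handling the degenerate cases in the LP argument. If $\mathcal{P}_1(F_1,b_1)$ is empty, the primal is infeasible and the row-by-row argument above breaks down; the remedy is to observe that Farkas' lemma then provides $\lambda \geq 0$ with $\lambda F_1 = 0$ and $\lambda b_1 < 0$, from which one can build $\Omega$ by scaling. If the primal is feasible but unbounded in direction $F_2^{[l]}$, this would contradict $\mathcal{P}_1 \subseteq \mathcal{P}_2$ since $b_2^{[l]}$ is finite, ruling this case out. Once both degeneracies are dispatched, strong duality for feasible and bounded LPs closes the argument, and the row-wise construction assembles into the claimed $\Omega$.
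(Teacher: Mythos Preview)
The paper does not supply its own proof of this proposition; it is quoted from \cite{Blanchini2015} and used as a black box. Your argument via LP duality (row-by-row Farkas) is the standard proof of this classical fact and would be the one found in that reference, so in that sense you are aligned with what the paper relies on.

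One caveat on your handling of degeneracies: the remedy you sketch for the case $\mathcal{P}_1(F_1,b_1)=\emptyset$ does not work in general. Having $\lambda\geq 0$ with $\lambda F_1=0$ and $\lambda b_1<0$ lets you drive $\Omega b_1$ arbitrarily negative, but it does \emph{not} help you satisfy $\Omega F_1=F_2$ when the rows of $F_2$ lie outside the linear span of the rows of $F_1$. A concrete obstruction in $\mathbb{R}^2$: take $F_1=\begin{bmatrix}1&0\\-1&0\end{bmatrix}$, $b_1=\begin{bmatrix}-1\\-1\end{bmatrix}$ (so $\mathcal{P}_1=\emptyset$) and $F_2=\begin{bmatrix}0&1\end{bmatrix}$; then any $\Omega F_1$ has second column zero, so $\Omega F_1=F_2$ is impossible regardless of scaling. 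The equivalence as stated therefore tacitly requires $\mathcal{P}_1$ to be nonempty, which is consistent with the word ``polytopic'' and with how the result is actually used in the paper (the tube cross-sections are nonempty by construction). With that standing assumption your primal LP is feasible and bounded, strong duality applies directly, and the empty-case patch is simply unnecessary.
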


Based on Proposition \ref{pro:set inclusion}, for all $\hat{\theta}(k) \in \Theta(k)$ from \eqref{eq:theta est set}, the inequality \eqref{eq:tube_i_1} is reformulated as
\begin{align}\label{eq:MPC theta condition}
    \bar{F}_e \Phi & \left( S_j \alpha_{i|k}, K S_j \alpha_{i|k}  + \mu_{i|k}  \right) \hat{\theta}(k) \nonumber\\
     &+ \bar{F}_e \phi \left( S_j \alpha_{i|k}, K S_j \alpha_{i|k} + \mu_{i|k}  \right) + \bar{w} \leq \alpha_{i+1|k}.
\end{align}

By using Proposition \ref{pro:set inclusion}, the condition \eqref{eq:MPC theta condition} for all $\hat{\theta}(k)\in \Theta(k)$ can be reformulated with vertex representation as follows:
\begin{subequations}\label{eq:vertex tube constraints}
    \begin{align}
        & \Omega_{j,i|k} \bar{F}_{\theta} = \bar{F}_e \Phi \left( S_j \alpha_{i|k}, K S_j \alpha_{i|k} + \mu_{i|k}  \right),\label{eq:vertex tube constraints-1}\\
        & \Omega_{j,i|k} b_{\theta}(k) + \bar{F}_e \phi \left( S_j \alpha_{i|k}, K S_j \alpha_{i|k} + \mu_{i|k}  \right)  \nonumber\\
        & \qquad - \alpha_{i+1|k} \leq -\bar{w},\label{eq:vertex tube constraints-2}\\
        & \Omega_{j,i|k} \geq 0, \, j \in \mathbb{Z}_{[1,p]}, \,i \in \mathbb{Z}_{[0,N-1]},\label{eq:vertex tube constraints-3}
    \end{align}
\end{subequations}
where $\Omega_{j,i|k} \in \mathbb{R}^{n_e \times q}$ for each $i$. 

To satisfy the original state and input constraints described in \eqref{eq:const_state_input}, the following constraint should be enforced in the MPC formulation:
\begin{align*}
    G e_{i|k} + H \bar{u}_{i|k} \leq \tilde{b}(k), \; i\in \mathbb{Z}_{[0,N]},\, \forall k \in \mathbb{Z}_{0+},
\end{align*}
where $\tilde{b}(k)=b - G x_{s}(k) - H u_{s}(k)$. Then, the above condition is equivalent to
\begin{align}\label{eq:vertex state input constraint}
     (G + H K) S_j \alpha_{i|k} + H \mu_{i|k} \leq \tilde{b}(k).
\end{align}

Following the similar procedure described above with \eqref{eq:control sequence}, the terminal conditions can also be formulated as follows:
\begin{subequations}
    \begin{align}
        & \Omega_{j,N|k} \bar{F}_{\theta} = \bar{F}_e \Phi \left( S_j \alpha_{N|k}, K_f S_j \alpha_{N|k} \right),\label{eq:vertex terminal constraint-1}\\
        & \Omega_{j,N|k} b_{\theta}(k) + \bar{F}_e \phi \left( S_j \alpha_{N|k}, K_f S_j \alpha_{N|k} \right) \nonumber\\ 
        & \qquad -\alpha_{N|k}  \leq -\bar{w},\label{eq:vertex terminal constraint-2}\\
        & \Omega_{j,N|k} \geq 0, \, j\in \mathbb{Z}_{[1,p]},\label{eq:vertex terminal constraint-3} \\
        & (G + H K) S_j \alpha_{N|k} +  H \sigma(k) \leq \tilde{b}(k). \label{eq:vertex terminal constraint-4}
    \end{align}
\end{subequations}

\subsubsection{Optimization Formulation}

For given speed reference $v_{d}(k)$, admissible steady states and inputs $(x_s(k),u_s(k))$ obtained by \eqref{prob:xs_us}, the safety-based speed tracking MPC optimization problem can be formulated as follows:
\begin{subequations}\label{problem:RAMPC}
	\begin{align}
	\underset{\mu_{i | k},\alpha_{i|k},\Omega_{j,i|k}}{\min}& \;
	\sum_{i=0}^{N-1} \ell \left( e_{i|k},\bar{u}_{i|k} \right) + \| e_{N|k} \|_{P(\hat{\theta}(k))}^2, \label{eq:RAMPC-cost function}\\
	\text{subject to} & \quad e_{0|k} = x(k)-x_s(k),\\
	& \quad e_{i+1|k} =A(\hat{\theta}(k)) e_{i|k} + B(\hat{\theta}(k)) \bar{u}_{i|k}, \\
	& \quad \bar{u}_{i|k} = K e_{i|k} + \mu_{i|k}, \; i \in \mathbb{Z}_{[0,N-1]},\\
	& \quad \bar{u}_{N|k} = K_f e_{N|k},
	\end{align}
\end{subequations}
and \eqref{eq:vertex tube constraints-1}-\eqref{eq:vertex tube constraints-3}, \eqref{eq:vertex state input constraint}, \eqref{eq:vertex terminal constraint-1}-\eqref{eq:vertex terminal constraint-4}, with
\begin{align}\label{eq:stage cost function}
    \ell \left( e_{i|k},\bar{u}_{i|k} \right) = \| e_{i|k} \|_{Q}^2  + \| \bar{u}_{i|k} \|_{R}^2,
\end{align}
where $Q \in \mathbb{S}^{4}_{\succeq 0}$ is the weighting matrix for penalizing velocity errors and $R \in \mathbb{S}^{2}_{\succ 0}$ is the weighting matrix for control input penalization. With the offline computed $K_{f}$, $P(\hat{\theta}(k))$ is adapted by using the following condition with online updating $\hat{\theta}(k)$ by \eqref{eq:proj_theta}:
\begin{equation}\label{eq:update P_theta}
 P(\hat{\theta}(k)) - A_{c}(\hat{\theta}(k))^{\top}P(\hat{\theta}(k))A_{c}(\hat{\theta}(k)) - Q - K_{f}^{\top}RK_{f} \succeq 0,
\end{equation}
where $A_{c}(\hat{\theta}(k)) = A(\hat{\theta}(k))+B(\hat{\theta}(k))K_{f}$.

After solving \eqref{problem:RAMPC} at each sampling time instant $k \in \mathbb{Z}_{0+}$, the control action is chosen as 
\begin{align}
    u(k) = K (x(k)-x_s(k)) + \mu^*_{0|k}+ u_s(k),
\end{align}
where $\mu^{*}_{0|k}$ is the the first element of the optimal solution.

\subsection{Closed-loop Property Analysis}\label{sec:property_closed}

We next discuss the closed-loop properties of the wheelchair operated by the proposed safety-based speed tracking MPC controller. The theoretical results are summarized in the following theorem.

\begin{theorem}[Closed-loop Properties]\label{theorem:closed-loop property}
    Consider Assumptions \ref{assumption:bounded disturbances}-\ref{assumption:Kf} hold. Given a speed reference $v_{d}$ and a feasible initial state $x(0)$, the wheelchair system \eqref{eq:sys_dis_vary} operated by the proposed robust adaptive tracking MPC in \eqref{problem:RAMPC} and estimated parameter updated by \eqref{eq:proj_theta} is recursively feasible and input-to-state stable (ISS) to a neighbourhood of admissible steady state $x_s$ obtained from the optimization problem \eqref{prob:xs_us} and the neighbourhood is defined with the bounds of uncertainty parameter $\theta$ and external disturbance $w$.
\end{theorem}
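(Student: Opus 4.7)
The plan is to establish the two claims in sequence. Recursive feasibility is handled by a standard shifted-sequence construction adapted to the set-membership tube and the parameter-dependent terminal ingredients; input-to-state stability then follows from using the optimal cost as an ISS-Lyapunov function.

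For recursive feasibility, I would start from the optimal solution at time $k$, $\{\mu^*_{i|k}, \alpha^*_{i|k}, \Omega^*_{j,i|k}\}$, and exhibit a feasible candidate at $k+1$ by shifting: $\mu_{i|k+1}=\mu^*_{i+1|k}$, $\alpha_{i|k+1}=\alpha^*_{i+1|k}$, $\Omega_{j,i|k+1}=\Omega^*_{j,i+1|k}$ for $i \in \mathbb{Z}_{[0,N-2]}$, with the last stage generated by the terminal feedback $K_f$; invoking Remark~\ref{remark:Kf=K} to take $K_f=K$ makes the splicing seamless. The realised initial error $e_{0|k+1}=x(k+1)-x_s(k+1)$ lies in the shifted tube because the inequality \eqref{eq:vertex tube constraints-2} at $i=0$ of the optimal solution was designed to contain every admissible $(\theta,w)$ pair, and because $v_d$ is piecewise constant so $x_s(k+1)=x_s(k)$ on the analysis interval with reference switches treated as isolated events. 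The monotonicity $\Theta(k+1)\subseteq\Theta(k)$ from \eqref{eq:theta set update} implies $b_\theta(k+1) \leq b_\theta(k)$ componentwise, so every multiplier inequality in \eqref{eq:vertex tube constraints-2} remains valid with the inherited $\Omega^*$ certificates. The new last-stage block is supplied by the contractivity guarantees in Assumptions~\ref{assumption:K} and \ref{assumption:Kf}, which ensure that \eqref{eq:vertex terminal constraint-1}--\eqref{eq:vertex terminal constraint-4} continue to hold under the terminal feedback.

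For input-to-state stability, I would adopt the optimal value $V^*(k)$ of \eqref{problem:RAMPC} as an ISS-Lyapunov candidate in the variable $e_{0|k}=x(k)-x_s(k)$. The lower bound $V^*(k) \geq \underline{\lambda}(Q)\|e_{0|k}\|^2$ is immediate from \eqref{eq:stage cost function}, and a local quadratic upper bound is obtained by constructing a simple feasible solution near the steady state using $K_f$ and the invariant set $\mathcal{E}(P,1)$ from Assumption~\ref{assumption:Kf}. Evaluating $V^*(k+1)$ at the shifted candidate and telescoping the stage costs, the terminal-cost inequality in Assumption~\ref{assumption:Kf}, $P-A_c(\theta)^\top P A_c(\theta) \succeq Q+K_f^\top R K_f$, absorbs the extra stage cost appended at step $N$ and yields the descent inequality
\begin{equation*}
V^*(k+1)-V^*(k) \leq -\ell(e^*_{0|k},\bar{u}^*_{0|k}) + \sigma_\theta(k) + \sigma_w(k),
\end{equation*}
where $\sigma_\theta(k)$ captures the mismatch between $\hat{\theta}(k)$ and $\hat{\theta}(k+1)$ in the predicted dynamics and in $P(\hat{\theta}(k))$, and $\sigma_w(k)$ captures the realised disturbance. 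Both perturbations admit bounds in terms of $\bar{b}_w$ and the prior $\bar{b}_\theta$, delivering the standard ISS characterisation and convergence of $x(k)$ to a neighbourhood of $x_s$ whose size is set by these bounds.

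I expect the main obstacle to be sharpening this descent inequality, for two coupled reasons. First, the terminal weight $P(\hat{\theta}(k))$ itself varies with $k$ through the online update \eqref{eq:update P_theta}, so the usual exact cancellation of terminal costs is replaced by a perturbation that must be controlled by a Lipschitz bound on $P(\theta)$ over $\mathcal{P}_\theta(\bar{F}_\theta,\bar{b}_\theta)$. Second, the predicted rollout uses $\hat{\theta}(k)$ while the closed-loop state obeys the unknown $\theta^*$, so the shifted candidate state deviates from the realised state by a term that propagates through $N$ prediction steps. The Euclidean projection \eqref{eq:proj_theta} is essential here: it guarantees that $\|\hat{\theta}(k)-\theta^*\|$ is monotonically non-increasing, and together with Lipschitz continuity of $A(\theta)$, $B(\theta)$, $P(\theta)$ on the compact parameter set this should suffice to close the bound and express the residual as an ultimate-bound gain on the parameter and disturbance uncertainties.
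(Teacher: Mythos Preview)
Your proposal is correct and follows essentially the same route as the paper: a shifted-sequence candidate (with the appended step generated by the terminal feedback and the set-monotonicity $\Theta(k+1)\subseteq\Theta(k)$) for recursive feasibility, and the optimal cost $V^*$ as an ISS-Lyapunov function whose decrement is bounded by the stage cost minus $\mathcal{K}_\infty$ perturbations in $\|w\|$ and $\|\hat{\theta}(k)-\theta^*\|$. The paper exploits the projection via the inequality $\|\hat{\theta}(k+1)-\hat{\theta}(k)\|\le\|\hat{\theta}(k)-\theta^*\|$ to absorb the change in $P(\hat{\theta}(k))$, which is precisely the mechanism you anticipate in your ``main obstacle'' paragraph; your additional remarks on Lipschitz continuity of $P(\theta)$ and on the propagation of the $\hat{\theta}(k)$-versus-$\theta^*$ mismatch over the horizon are in fact more explicit than what the paper spells out.
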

    
\begin{proof}
    \textbf{(Recursive Feasibility)} Consider the MPC problem~\eqref{problem:RAMPC} is feasible at a sampling time $k \in \mathbb{Z}_{0+}$. The optimal solution of \eqref{problem:RAMPC} at time $k$ can be denoted by 
    \begin{align*}
    \bm{\mu}^*(k) &= \{\mu_{0|k}^*, \ldots, \mu_{N-1|k}^{*} \},\\
    \bm{\alpha}^*(k) &= \{\alpha_{0|k}^*, \ldots, \alpha_{N|k}^{*}\},\\
    \bm{\Omega}^*(k) &= \{\Omega_{j,0|k}^*, \ldots, \Omega_{j,N|k}^*, \; j \in \mathbb{Z}_{[1,p]} \}, 
    \end{align*}
    from which, at the next time instant $k+1$, a suboptimal solution sequence can be constructed as follows: $\bm{\mu}(k+1) = \{\mu_{1|k}^{*},\allowbreak \ldots, \mu_{N-1|k}^{*}, \mu_{N-1|k+1} \}$, $\bm{\alpha}(k+1) = \{\alpha_{1|k}^*, \ldots, \alpha_{N|k}^{*}, \alpha_{N|k}^{*}\} $, and $\bm{\Omega}(k+1) = \{\Omega_{j,1|k}^*, \ldots, \Omega_{j,N|k}^*, \Omega_{j,N|k}^*, \; j \in \mathbb{Z}_{[1,p]} \} $. Note that from the conditions discussed in Assumptions \ref{assumption:K}-\ref{assumption:Kf}, there exists $\mu_{N-1|k+1}$ such that
    \begin{align*}
        K S_j \alpha^*_{N|k} + \mu_{N-1|k+1} = K_f S_j \alpha^*_{N|k}, \, j \in \mathbb{Z}_{[1,p]}.
    \end{align*}
    
    As a special case mentioned in Remark \ref{remark:Kf=K}, if $K_f$ is chosen to be the same as $ K $, the above condition holds with $\mu_{N-1|k+1} = \mathbf{0}_{2,1}$. The above suboptimal solution at time $k+1$ satisfies all the constraints of \eqref{problem:RAMPC} for $\Theta(k+1) \subseteq \Theta(k)$ indicated by \eqref{eq:theta set update}. Therefore, the MPC optimization problem \eqref{problem:RAMPC} is also feasible at the next time $k+1$. Thus, it is recursively feasible.

    \textbf{(ISS Stability)} 
    Since the closed-loop system is recursive feasible, the Lyapunov candidate function for this closed-loop system is chosen as the optimal MPC cost function that can be denoted by $V^*(x(k),\hat{\theta}(k),x_s)$. 
    
    From the MPC cost function defined in \eqref{eq:RAMPC-cost function}, we know
    \begin{align}\label{eq:Lyapunov V-1-1}
        V^*(x(k),\hat{\theta}(k),x_s) &\geq \| e_{0|k} \|_{Q}^2= \|x(k)-x_s\|_Q^2 \nonumber\\
        &\geq \underline{\lambda}(Q) \|x(k)-x_s\|^2 \nonumber\\
        & \triangleq \sigma_1 \left( \|x(k)-x_s\| \right),
    \end{align}
    where $\sigma_1 (\cdot)$ is a $ \mathcal{K}_{\infty}$ function. In addition, for quadratic stage cost functions, it can be verified that there exists a $ \mathcal{K}_{\infty}$ function $\sigma_2(\cdot)$ such that \cite{rawlings2017model} 
    \begin{align}\label{eq:Lyapunov V-1-2}
         V^*(x(k),\hat{\theta}(k),x_s) \leq \sigma_2 \left( \| x(k)-x_s \| \right).
    \end{align}
    
    Then, we can derive
    \begin{align*}
        & \; V^*(x(k+1),\hat{\theta}(k+1),x_s) - V^*(x(k),\hat{\theta}(k),x_s)\\
        \leq  & \; V(x(k+1),\hat{\theta}(k+1),x_s) - V^*(x(k),\hat{\theta}(k),x_s),
    \end{align*}
    where $V(x(k+1),\hat{\theta}(k+1),x_s)$ denotes an MPC cost function at time $k+1$ with a suboptimal solution.
    
    From the optimal solution at time instant $k$, a suboptimal input sequence at time $k+1$ can be chosen to be
    \begin{align*}
        \bar{u}_{i|k+1} &= K e_{i|k+1} + \mu^*_{i+1|k}, \; i \in \mathbb{Z}_{[0,N-2]},\\
        \bar{u}_{N-1|k+1} &= K_f e_{N-1|k+1},\\
        \bar{u}_{N|k+1} &= K_f e_{N|k+1}.
    \end{align*}
    
    With this suboptimal input sequence, we have that 
    \begin{align*}
        & \;V(x(k+1),\hat{\theta}(k+1),x_s) - V^*(x(k),\hat{\theta}(k),x_s) \\
        =  & \; \sum_{i=0}^{N-2}\left( \ell \left( e_{i|k+1},\bar{u}_{i|k+1} \right) -\ell \left( e^*_{i+1|k},\bar{u}^*_{i+1|k} \right) \right)   \\
        & \;\; - \ell ( e^*_{0|k},\bar{u}^*_{0|k} ) + \ell \left( e_{N-1|k+1},\bar{u}_{N-1|k+1} \right) \\
        & \;\; + \| e_{N|k+1} \|_{P(\hat{\theta}(k+1))}^2 - \| e_{N-1|k+1} \|_{P(\hat{\theta}(k+1))}^2\\
        & \;\; + \| e_{N-1|k+1} \|_{P(\hat{\theta}(k+1))}^2 - \| e^*_{N|k} \|_{P(\hat{\theta}(k))}^2.
    \end{align*}
    
    From \eqref{eq:update P_theta} at next time instant $k+1$, it is obvious that
    \begin{align*}
        \ell ( e_{N-1|k+1}, & \bar{u}_{N-1|k+1} )  + \| e^*_{N|k+1} \|_{P(\hat{\theta}(k+1))}^2 \\
        & - \| e_{N-1|k+1} \|_{P(\hat{\theta}(k+1))}^2 \leq 0,
    \end{align*}
    which leads to
    \begin{align*}
        & \;V(x(k+1),\hat{\theta}(k+1),x_s) - V^*(x(k),\hat{\theta}(k),x_s) \\
        \leq  & \; \sum_{i=0}^{N-2}\left( \ell \left( e_{i|k+1},\bar{u}_{i|k+1} \right) -\ell \left( e^*_{i+1|k},\bar{u}^*_{i+1|k} \right) \right)   \\
        & \;\; - \ell ( e^*_{0|k},\bar{u}^*_{0|k} ) + \| e_{N-1|k+1} \|_{P(\hat{\theta}(k+1))}^2 - \| e^*_{N|k} \|_{P(\hat{\theta}(k))}^2.
    \end{align*}
    
    From the parameter update rule indicated by the optimization problem \eqref{eq:proj_theta}, it yields
    \begin{align*}
        \|\hat{\theta}(k+1) - \hat{\theta}(k)\| \leq \|\hat{\theta}(k) - \theta^* \|. 
    \end{align*}

    Then, for each $i \in \mathbb{Z}_{[0,N-2]}$, there exist $\mathcal{K}_{\infty}$ functions $\varrho_i(\cdot)$, $\eta_i (\cdot)$ and $\zeta_i(\cdot)$ such that
    \begin{align*}
        & \; \ell \left( e_{i|k+1},\bar{u}_{i|k+1} \right) -\ell \left( e^*_{i+1|k},\bar{u}^*_{i+1|k} \right)\\
        \leq & \; \varrho_i (\|e_{i|k+1}  -e^*_{i+1|k} \|)\\
        \leq & \; \eta_i (\|w\|) + \zeta_i (\|\hat{\theta}(k) - \theta^*\|), \, i \in \mathbb{Z}_{[0,N-2]}.
    \end{align*}
    
    Furthermore, for quadratic functions, there exist $\mathcal{K}_{\infty}$ functions $\varrho_{N-1}(\cdot)$, $\eta_{N-1} (\cdot)$, $\zeta_{N-1}(\cdot)$ and $\tau(\cdot)$ such that 
    \begin{align*}
        & \; \| e_{N-1|k+1} \|_{P(\hat{\theta}(k+1))}^2 - \| e^*_{N|k} \|_{P(\hat{\theta}(k))}^2 \\
        \leq  & \; \varrho_{N-1} (\|e_{N-1|k+1}  -e^*_{N|k} \|) + \tau (\|\hat{\theta}(k+1) - \hat{\theta}(k)\|)\\
        \leq  & \; \eta_{N-1} (\|w\|) + \zeta_{N-1} (\|\hat{\theta}(k) - \theta^*\|) + \tau (\|\hat{\theta}(k) - \theta^*\|).
    \end{align*}
    
    From the definition of $\ell ( e^*_{0|k},\bar{u}^*_{0|k} )$ in \eqref{eq:stage cost function}, there exists a $\mathcal{K}_{\infty}$ function $\sigma_3(\cdot)$ such that
    \begin{align*}
        \ell ( e^*_{0|k},\bar{u}^*_{0|k} ) \geq  \sigma_3 ( \| e^*_{0|k}\| )= \sigma_3 \left( \|x(k)-x_s\| \right).
    \end{align*}
    
    Finally, by combining the above derived bounds for each term, we can conclude that
    \begin{align}\label{eq:Lyapunov V-2}
        & \; V^*(x(k+1),\hat{\theta}(k+1),x_s) - V^*(x(k),\hat{\theta}(k),x_s) \nonumber\\
        \leq  &  - \sigma_3 \left( \|x(k)-x_s\| \right) + \eta_w \left( \|w\| \right) + \zeta_{\theta} ( \| \hat{\theta}(k) - \theta^* \|),
    \end{align}
    with
    \begin{align*}
        \eta_w \left( \|w\| \right) &= \sum_{i=0}^{N-1} \eta_{i} (\|w\|),\\
        \zeta_{\theta} ( \| \hat{\theta}(k) - \theta^* \|) &= \sum_{i=0}^{N-1}\zeta_i (\|\hat{\theta}(k) - \theta^*\|) + \tau (\|\hat{\theta}(k) - \theta^*\|).\\
    \end{align*}
    
    Based on \cite[Definition 7 and Remark 5]{limon2009input}, the chosen Lyapunov function satisfies the conditions in \eqref{eq:Lyapunov V-1-1}, \eqref{eq:Lyapunov V-1-2} and \eqref{eq:Lyapunov V-2}, which proves that it is an ISS Lyapunov function. Then, the wheelchair system \eqref{eq:sys_dis_vary} operated by the proposed robust adaptive tracking MPC in \eqref{problem:RAMPC} and estimated parameter updated by \eqref{eq:proj_theta} is ISS to a neighbourhood of $x_s$ being close to the given speed reference $v_d$.
\end{proof}

\section{Results}\label{sec:results}

In this section, we apply the proposed control algorithm into an electric-powered wheelchair. The experiment with the wheelchair has been carried out to collect voltage and velocity data, which are used for system identification with parameter estimation. The results with a high-fidelity wheelchair model show the effectiveness of the proposed control algorithm.

\subsection{Experiment Setup}

To validate the effectiveness of the proposed method, the speed tracking control is implemented on a high-fidelity model of the wheelchair as shown in Fig.~\ref{fig:wheelchair_photo}. This wheelchair is powered by a battery with maximum voltage of $24$ V. The two wheels are driven by Brushless DC motors with a maximum velocity of $2$ m/s. An absolute rotary encoder with 4096 pulses per revolution is used for velocity measurement.

\begin{figure}[thbp]
	\centerline{\includegraphics[width = \columnwidth]{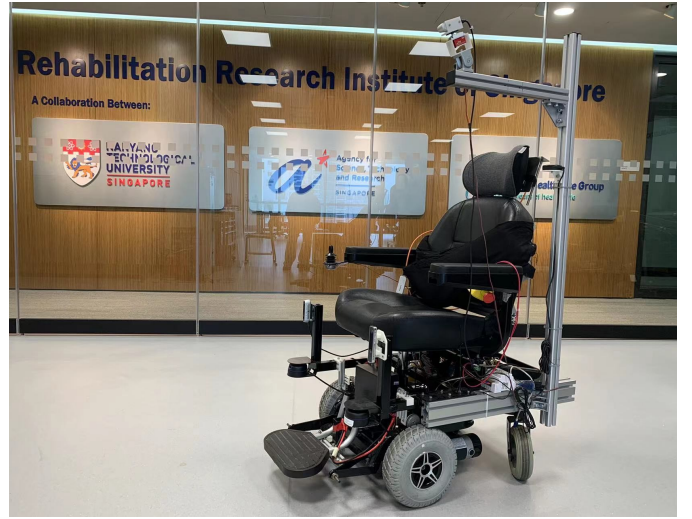}}
	\caption{Electric powered wheelchair in experiment.}{\label{fig:wheelchair_photo}}
\end{figure}	
	
An experiment participant with $80$ kg drives the wheelchair with open-loop voltage input using the joystick to fully excite the system for system identification. Based on the measured voltage input and velocity output, the system parameters are identified using the least-square method and the model validation is conducted by comparing the actual velocity measurement with the estimated velocity based on the proposed model. The identified motor parameters are: $l_{1} = 0.614$, $l_{2} = 0.482$, $r_{1} = 8.138$, $r_{2} = 5.871$, $k_{e_1} = 3.471$, $k_{e_2} = 2.610$, $k_{t_1} = 1.324$ and $k_{t_2} = 0.827$. A model validation result for right wheel is shown in Fig.~\ref{fig:model_vali}. It can be seen that with the same voltage in model validation, the estimated velocity follows the trend of the measured velocity.

\begin{figure}[tbp]
    \subfigure[]{
    \includegraphics[width = \columnwidth]{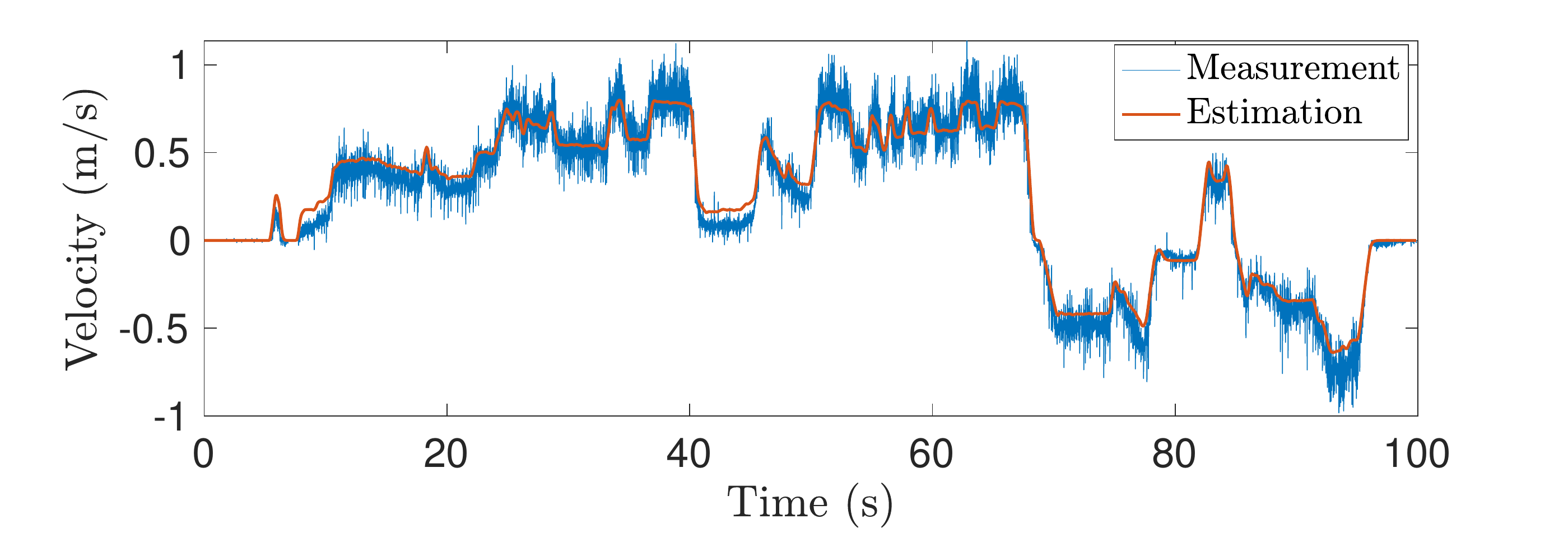}\label{fig:model_vali_a}
    }
    \subfigure[]{
    \includegraphics[width = \columnwidth]{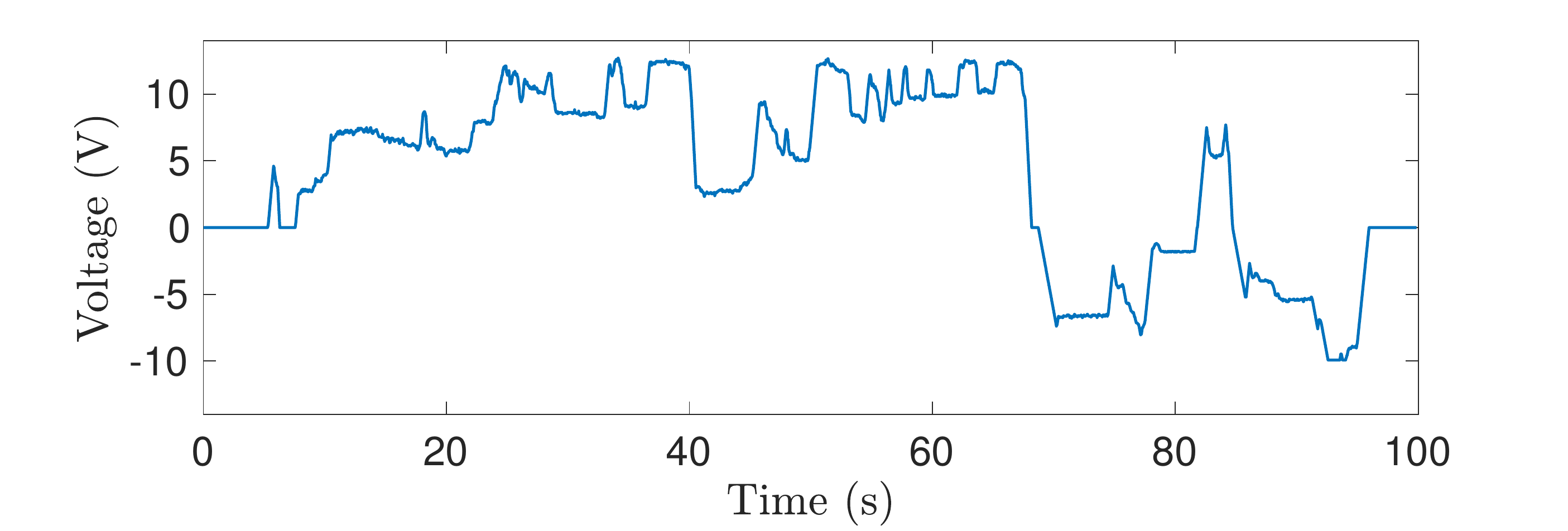}\label{fig:model_vali_b}
    }
    \caption{Model validation result: (a) measured and estimated velocity of right wheel; (b) input voltage.}
    \label{fig:model_vali}
\end{figure}

\subsection{Parameters in the Wheelchair LPV Model}

The parameter uncertainty exists when the wheelchair user changes, i.e., $M^{-1} = \bar{M}^{-1}+\beta\tilde{M}$ and $D = \bar{D} +\gamma \tilde{D}$, where $\bar{M}^{-1}$ and $\bar{D}$ are the nominal matrix of $M^{-1}$ and $D$. The uncertain parameter becomes $\theta = \left[\begin{array}{ccc}
\beta & \gamma & \beta\gamma\end{array}\right]^{\top}$ for $q=3$ and the system model matrices in \eqref{eq:sys_para_variation} can be desribed as follows:
\begin{align*}
	A_{0} &=\left[\begin{array}{cc}
		I_{2} & T_{s}I_{2}\\
		-T_{s}\bar{\Gamma} & I_{2}-T_{s}(\bar{M}^{-1}\bar{D}+L^{-1}R)
	\end{array}\right],\\
	A_{1}& =\left[\begin{array}{cc}
		\mathbf{0}_{2,2} & \mathbf{0}_{2,2}\\
		-T_{s}(\tilde{M}L^{-1}R\bar{D}+\tilde{M}K_{t}L^{-1}K_{e}) & -T_{s}\tilde{M}\bar{D}
	\end{array}\right],\\
	A_{2} &=\left[\begin{array}{cc}
		\mathbf{0}_{2,2} & \mathbf{0}_{2,2}\\
		-T_{s}\bar{M}^{-1}L^{-1}R\tilde{D} & -T_{s}\bar{M}^{-1}\tilde{D}
	\end{array}\right],\\
	A_{3} &=\left[\begin{array}{cc}
		\mathbf{0}_{2,2} & \mathbf{0}_{2,2}\\
		-T_{s}\tilde{M}L^{-1}R\tilde{D} & -T_{s}\tilde{M}\tilde{D}
	\end{array}\right],\\
	B_{0}& =\left[\begin{array}{c}
		\mathbf{0}_{2,2}\\
		T_{s}\bar{M}^{-1}K_{t}L^{-1},
	\end{array}\right],\; B_{1}=\left[\begin{array}{c}
		\mathbf{0}_{2,2}\\
		T_{s}\tilde{M}K_{t}L^{-1}
	\end{array}\right],\\
	B_{2}&=\mathbf{0}_{4,2}, \;B_{3}=\mathbf{0}_{4,2},
\end{align*}
where $\bar{\Gamma} = \bar{M}^{-1}L^{-1}R\bar{D}+\bar{M}^{-1}K_{t}L^{-1}K_{e}$. By first fixing the identified value of motor parameters, the values of nominal matrices $\bar{M}^{-1}$, $\bar{D}$ and the difference matrices $\tilde{M}$, $\tilde{D}$ are computed based on three different participants with weight range from $60$ kg to $85$ kg with detailed value as follows:
\begin{align*}
    \bar{M}^{-1} &=\left[\begin{array}{cc}
    3.822 & 1.776\\
    1.549 & 4.839
    \end{array}\right],\, \bar{D} = \mathrm{diag}(2.034,1.854),\\
    \tilde{M} &= \left[\begin{array}{cc}
    1.241 & -0.046\\
    0.733 & 0.185
    \end{array}\right], \tilde{D} = \mathrm{diag}(0.048, 0.024).
\end{align*}

The uncertain parameter $\theta$ is considered within the constraint $ [-0.8, -0.8, -0.64]^\top \leq \theta \leq [0.8, 0.8, 0.64]^\top $. A participant with true system parameters $\theta^{*} = [0.5, 0.6, 0.3]^{\top}$ is involved in the speed tracking. The initial parameter estimation is set as $\hat{\theta}(0) = [0.55,0.1,0.055]^\top$.

\subsection{Speed Tracking Results}

We next validate the performance of the proposed control algorithm on the wheelchair using two speed profiles (Task A and Task B). As shown in Fig.~\ref{fig:wheelchair_move}, the first case (Task A) is climbing and descending inclined ramps. This task is similar to the task in Cybathlon event \cite{ETHZurich2020}, but with a smaller maximum slope angle as the design of daily life ramps considers the capacity of different types of wheelchairs, and the maximum slope for hand-propelled wheelchairs is $5$ degree \cite{BrainLine2008}. This task involves point-to-point movement and is representative since inclined surfaces are common in daily life. Here, the reference of velocity and acceleration are generated by an available planner where the constraints of velocity and acceleration are considered at the planning level. According to Assumption \ref{assumption:bounded disturbances}, the disturbance torques are unknown but within the constraint $-[6.5,6.5]^{\top}\leq w \leq [6.5,6.5]^{\top}$ Nm, and the gravity force of operator introduces a constant disturbance on the system when the wheelchair is on the ramps.

\begin{figure}[tb]
	\centerline{\includegraphics[width = 0.85\columnwidth]{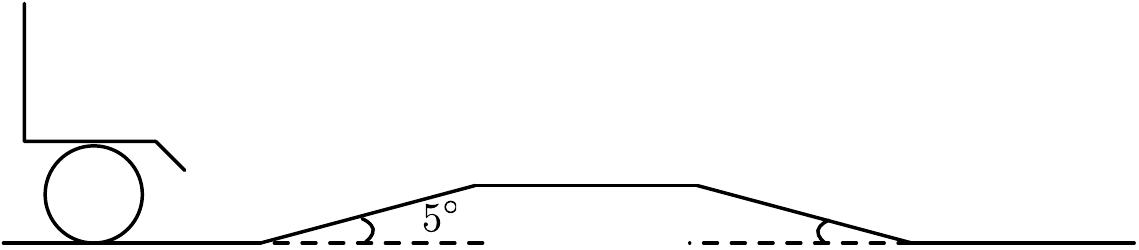}}
	\caption{Task of climbing and descending inclined ramps for wheelchair.}{\label{fig:wheelchair_move}}
\end{figure}

To ensure the safety of wheelchair users while providing a comfortable transient performance, the wheelchair is required to operate within $1$ m/s velocity constraint while respecting $2$ m/s$^2$ maximum acceleration constraint \cite{Gulati2008}. The speed on the flat surface, climbing and descending ramps are set as $1$ m/s, $0.9$ m/s and $0.5$ m/s, respectively. The sampling time of controller is $T_{s} = 0.005$ s. The prediction horizon of MPC is set as $N = 3$. The tuning parameters are set as $Q = \mathrm{diag}(1000, 1000, 1000, 1000)$ and $R = \mathrm{diag}(0.001,0.001)$. For comparison, a proportional-integral (PI) controller is used as a benchmark controller in the performance comparison. 

\begin{figure}[tb]
	\centerline{\includegraphics[width = \columnwidth]{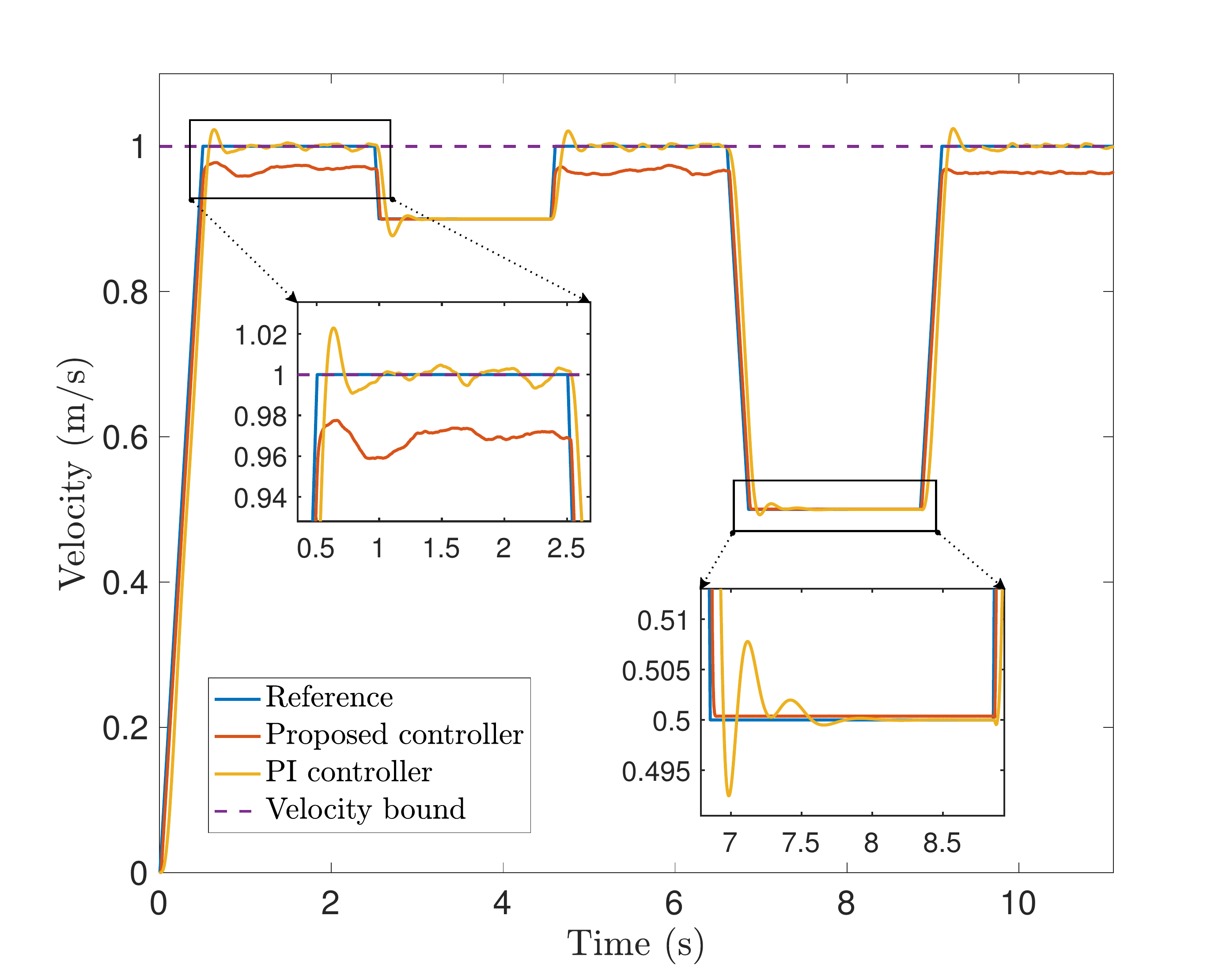}}
	\caption{Speed tracking result of the right wheel of wheelchair in Task A.}{\label{fig:vr_ramp}}
\end{figure}

\begin{figure}[tb]
	\centerline{\includegraphics[width = \columnwidth]{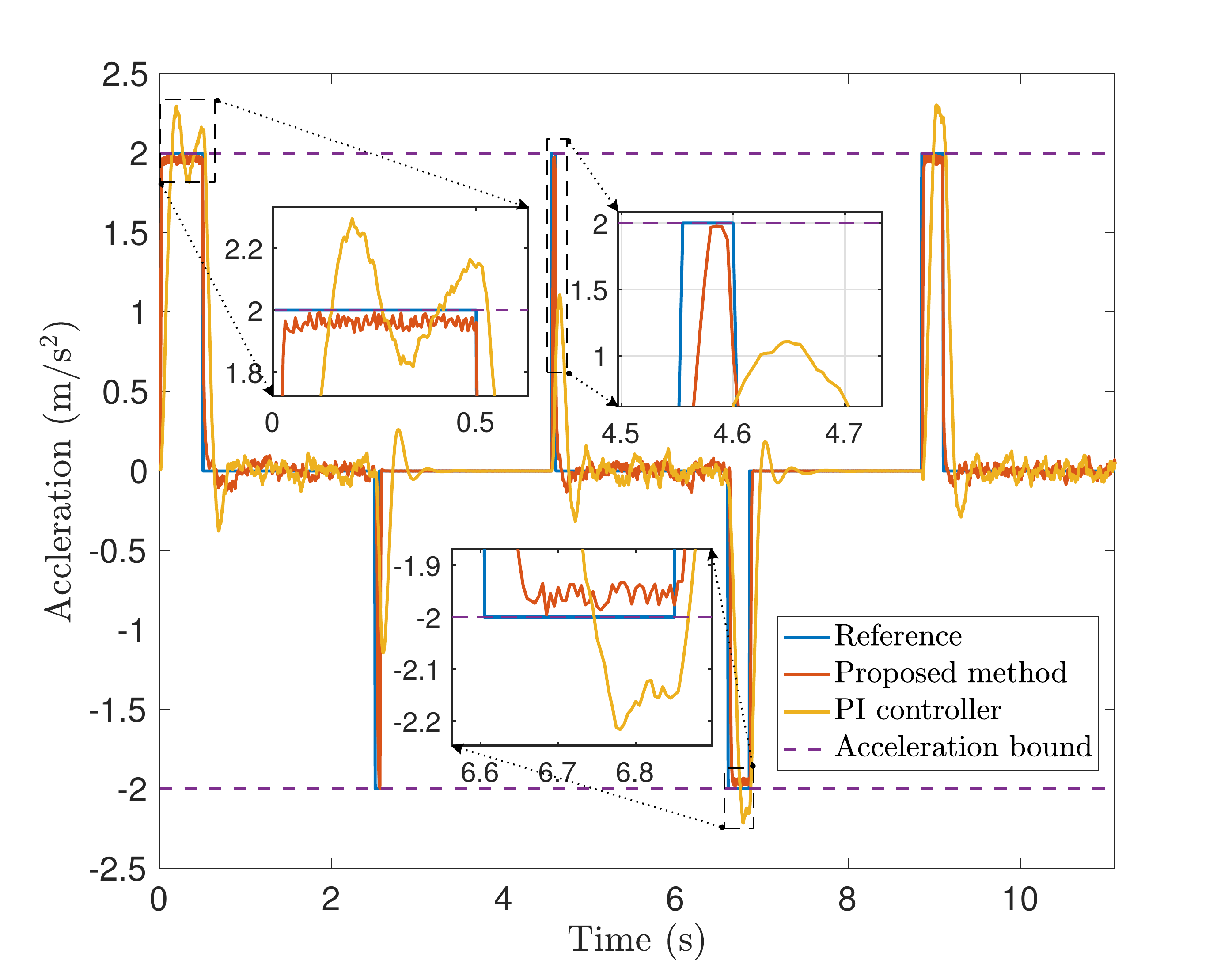}}
	\caption{Acceleration of the right wheel of wheelchair in Task A.}{\label{fig:ar_ramp}}
\end{figure}

Since the reference for the left and right wheels are same in the Task A, only the tracking results of the right wheel are presented and analysed here. The velocity and acceleration of the right wheel based on the proposed method and PI controller for wheelchair riding on ramps are demonstrated in Fig.~\ref{fig:vr_ramp} and Fig.~\ref{fig:ar_ramp}, respectively. 

In the application of wheelchair speed tracking, the speed and acceleration constraints should always be satisfied for the safe operation. However, as shown in Fig.~\ref{fig:vr_ramp} and Fig.~\ref{fig:ar_ramp}, although the reference is generated considering velocity and acceleration constraint satisfaction, with the PI controller, the actual speed and acceleration of wheelchair both violate the given safety constraints. Conversely, the proposed MPC controller provides smoother acceleration and deceleration while keeping the maximum acceleration within tolerance during the transient process. Specifically, during $4.5$ s to $4.6$ s in Fig.~\ref{fig:ar_ramp}, it can be seen that the designed controller tries to push the actuator close to its maximum acceleration while the PI controller only runs at $1$ m/s$^{2}$ maximum acceleration when the wheelchair is moving from climbing ramps to flat ground. During $6.5$ s to $7$ s, the wheelchair is moving from the flat surface to descending ramps. Based on the magnified result in Fig.~\ref{fig:vr_ramp} and Fig.~\ref{fig:ar_ramp}, the proposed method accelerates and decelerates the wheelchair more smoothly without violating the acceleration constraints, whereas the acceleration exceeds the tolerance and introduces velocity oscillation by using the PI controller.

The maximum speed error of the right and left wheels are summarised in Table~\ref{tab:err_ramp}. It can also be seen that the proposed method achieves a smaller speed tracking error on both left and right wheel sides. The input voltage of the right and left wheels are shown in Fig.~\ref{fig:input_ramp} and both the voltage constraints are satisfied for the left and right wheels. Furthermore, the evolution of estimated uncertain parameters are shown in Fig.~\ref{fig:theta}. The true values of uncertain parameters are identified from a participant, which is unknown to the controller. The estimated value tends to the true value during the process.

\begin{figure}[tb]
	\centerline{\includegraphics[width = \columnwidth]{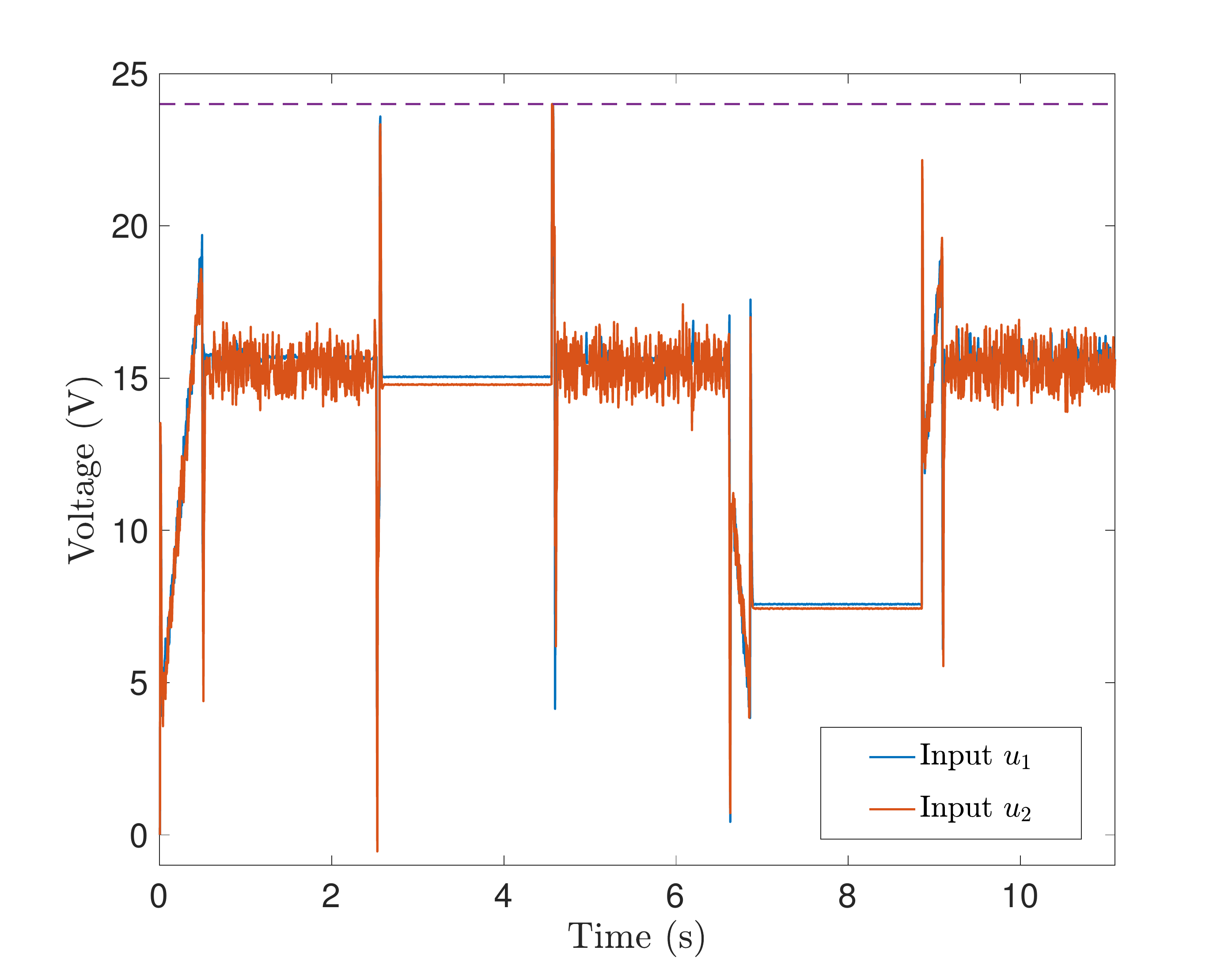}}
	\caption{Input voltage for speed tracking of wheelchairs in Task A.}{\label{fig:input_ramp}}
\end{figure}

\begin{figure}[tb]
	\centerline{\includegraphics[width = \columnwidth]{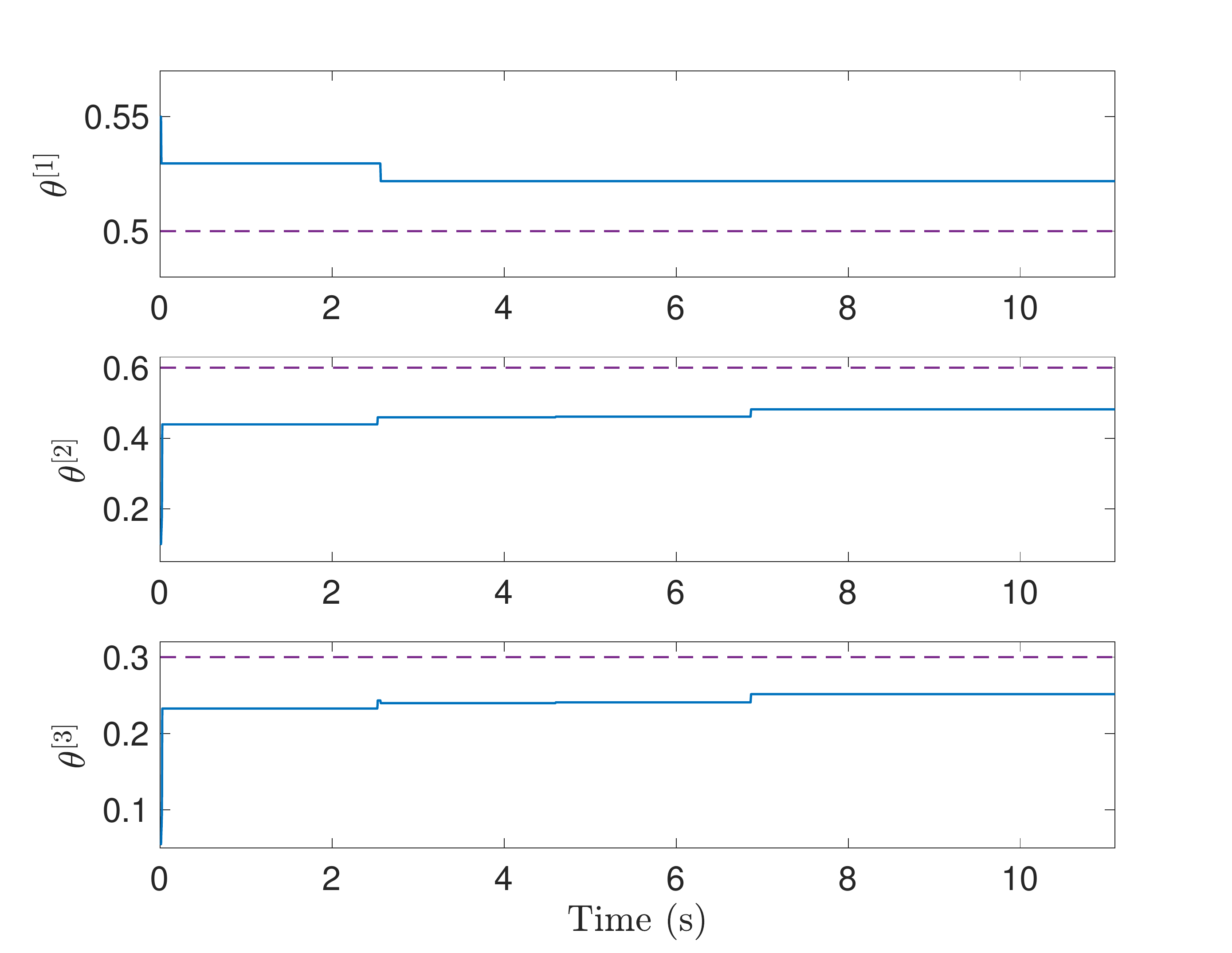}}
	\caption{Evolution of uncertain parameter.}{\label{fig:theta}}
\end{figure}

\begin{table}[t]
\centering
\begin{threeparttable}
\caption{Maximum speed tracking error for the wheelchair in Task A.}\label{tab:err_ramp}
\begin{centering}
\begin{tabular}{ccc}
\hline 
 & Proposed method & PI controller\tabularnewline
\hline 
$\left\Vert e^{[1]}\right\Vert _{\infty}$ & 0.047 & 0.158\tabularnewline
$\left\Vert e^{[2]}\right\Vert _{\infty}$ & 0.048 & 0.149\tabularnewline
\hline 
\end{tabular}
\begin{tablenotes}
	\small
	\item Note: the unit of numbers is m/s.
\end{tablenotes}
	\par\end{centering}
\end{threeparttable}
\end{table}

In Task B, the wheelchair is required to track a sinusoidal reference while respecting maximum velocity constraint at $0.2$ m/s and maximum acceleration at $1$ m/s$^{2}$. The speed references for left and right wheel are $v_{d_1}(t) = -0.5 \sin(2t)$ and $v_{d_2}(t) = 0.5\sin(2t)$, respectively. This is relevant to a practical scenario such as manual manipulating of the wheelchair when an inadmissible reference is given by the users through a joystick, the speed control should be able to follow the reference as close as possible while satisfying the safety constraints classified by velocity and acceleration tolerance. 

The corresponding speed and acceleration during the tracking process are demonstrated in Fig.~\ref{fig:v_sin} and Fig.~\ref{fig:a_sin}, respectively. It can be seen from Fig.~\ref{fig:v_sin} that, when the reference speeds are admissible, the actual speeds of the left and right wheels follow the reference closely. The proposed control algorithm steers the speed of wheelchair to the closed admissible reference instead during $0.2$ s to $1.4$ s and $1.8$ s to $2.9$ s when the reference is inadmissible. Meanwhile, it can also be seen from Fig.~\ref{fig:a_sin} that the proposed controller ensures the acceleration constraint during the whole process. 

\begin{figure}[tb]
	\centerline{\includegraphics[width = \columnwidth]{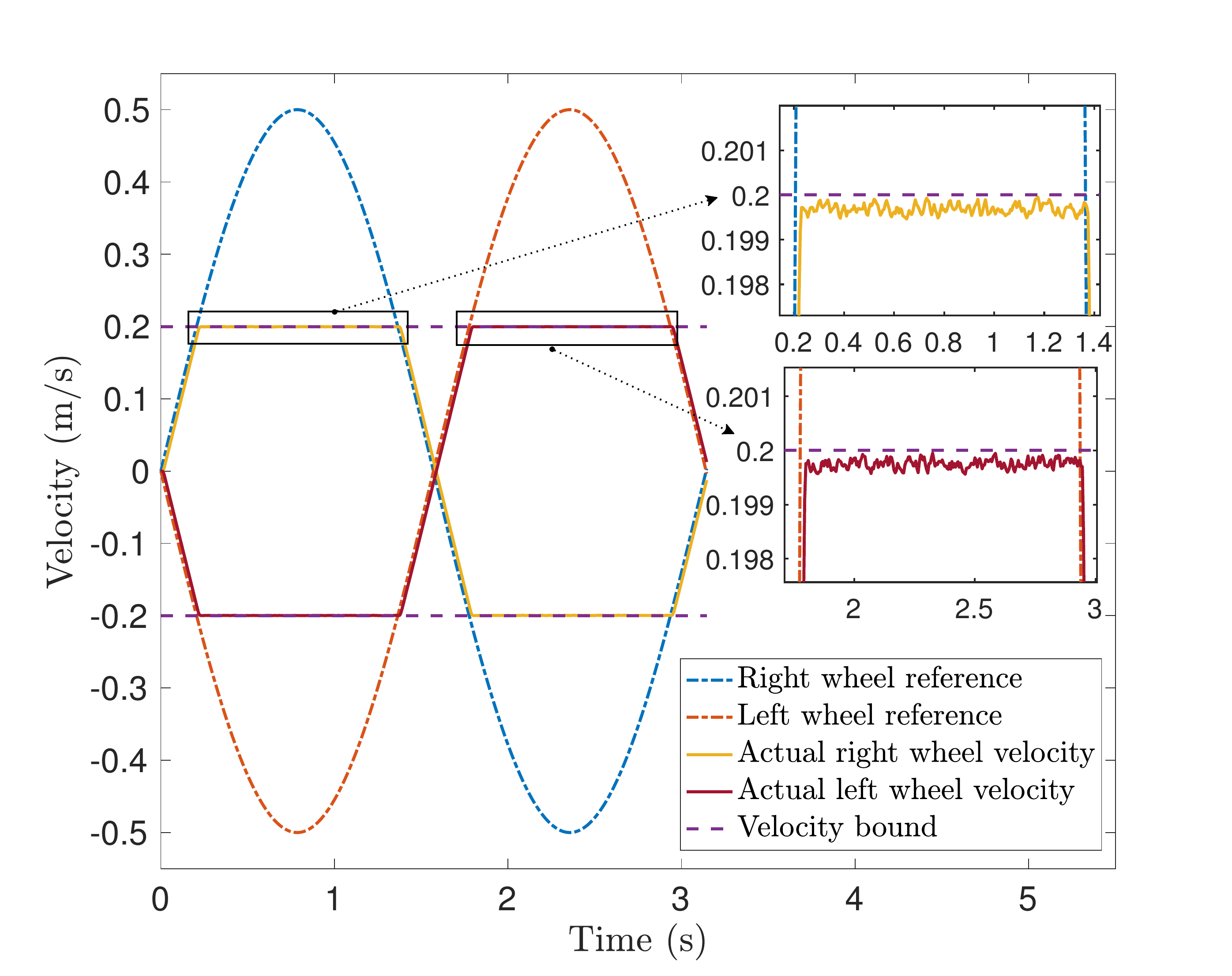}}
	\caption{Speed result for sinusoidal reference tracking.}{\label{fig:v_sin}}
\end{figure}

\begin{figure}[t]
	\centerline{\includegraphics[width = \columnwidth]{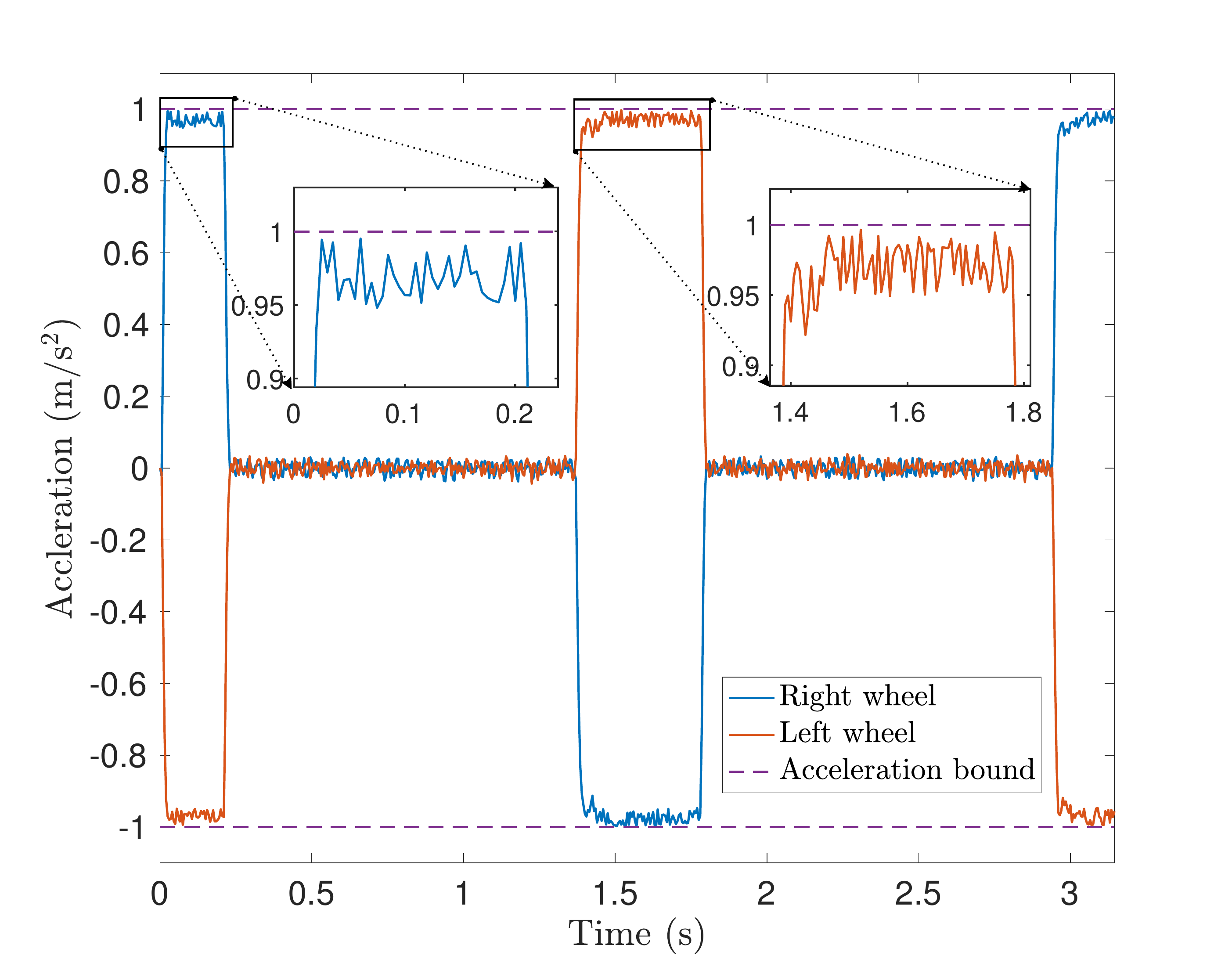}}
	\caption{Acceleration result for sinusoidal reference tracking.}{\label{fig:a_sin}}
\end{figure}

\section{Conclusion}\label{sec:conclusion}

In this work, we have presented a robust adaptive MPC algorithm for achieving safety-based speed tracking of an electric powered wheelchair. The external disturbance introduced in different scenarios and varying parameters due to change of users have been explicitly considered in the controller design. To guarantee recursive feasibility and ensure constraint satisfaction in the proposed MPC scheme, a set-membership approach has been utilized for online set propagation bounding unknown but constant uncertain parameters. The pair of steady state and input has been introduced when a given reference is inadmissible. For the considered LPV system with parameter uncertainties of the wheelchair, the state and input constraints can be robustly guaranteed. The effectiveness of the proposed algorithm has been shown by two representative task results on a wheelchair. The developed algorithm is shown to achieve decent tracking results while ensuring the state and input constraints for admissible and inadmissible references when different users and different application scenarios are involved.

\bibliographystyle{IEEEtranDOI}
\bibliography{Arxiv_file}	
	
\end{document}